
\documentclass{SCIS2024}
\usepackage{graphicx}
\usepackage{svg}
\usepackage{subfig}

\usepackage{amsmath,amsfonts}
\usepackage{amssymb}
\usepackage{algorithmic}
\usepackage{algorithm}



\begin{document}
\ArticleType{RESEARCH PAPER}
\Year{2024}
\Month{}
\Vol{}
\No{}
\DOI{}
\ArtNo{}
\ReceiveDate{}
\ReviseDate{}
\AcceptDate{}
\OnlineDate{}

\title{Diffusion-based Auction Mechanism for Efficient Resource Management in 6G-enabled Vehicular Metaverses
}{Title keyword 5 for citation Title for citation Title for citation}

\author[1]{Jiawen Kang}{{kjwx886@163.com}}
\author[1]{Yongju Tong}{}
\author[1]{Yue Zhong}{}
\author[1]{Junlong Chen}{}
\author[2]{Minrui Xu}{}
\author[2]{\\Dusit Niyato}{}
\author[1]{Runrong Deng}{}
\author[3]{Shiwen Mao}{}

\AuthorMark{Kang J W}

\address[1]{School of Automation, Guangdong University of Technology, Guangzhou {\rm 510006}, China}
\address[2]{School of Computer Science and Engineering, Nanyang Technological University, Singapore {\rm 639798}, Singapore}
\address[3]{Department of Electrical and Computer Engineering, Auburn University, Auburn {\rm AL 36849-5201}, USA}

\abstract{
The rise of 6G-enable Vehicular Metaverses is transforming the automotive industry by integrating immersive, real-time vehicular services through ultra-low latency and high bandwidth connectivity. In 6G-enable Vehicular Metaverses, vehicles are represented by Vehicle Twins (VTs), which serve as digital replicas of physical vehicles to support real-time vehicular applications such as large Artificial Intelligence (AI) model-based Augmented Reality (AR) navigation, called VT tasks. VT tasks are resource-intensive and need to be offloaded to ground Base Stations (BSs) for fast processing. However, high demand for VT tasks and limited resources of ground BSs, pose significant resource allocation challenges, particularly in densely populated urban areas like intersections. As a promising solution, Unmanned Aerial Vehicles (UAVs) act as aerial edge servers to dynamically assist ground BSs in handling VT tasks, relieving resource pressure on ground BSs. However, due to high mobility of UAVs, there exists information asymmetry regarding VT task demands between UAVs and ground BSs, resulting in inefficient resource allocation of UAVs. To address these challenges, we propose a learning-based Modified Second-Bid (MSB) auction mechanism to optimize resource allocation between ground BSs and UAVs by accounting for VT task latency and accuracy. Moreover, we design a diffusion-based reinforcement learning algorithm to optimize the price scaling factor, maximizing the total surplus of resource providers and minimizing VT task latency. Finally, simulation results demonstrate that the proposed diffusion-based MSB auction outperforms traditional baselines, providing better resource distribution and enhanced service quality for vehicular users.}
\keywords{Auction Model, Diffusion, Resource Allocation, Edge Intelligence, Large AI Model}

\maketitle

\section{Introduction}

With the rapid development of Sixth Generation (6G) technology, the concept of 6G-enabled Vehicular Metaverses is revolutionizing intelligent transportation systems by offering ultra-reliable, low-latency communication, massive connectivity, and advanced network capabilities that seamlessly link the physical and virtual worlds\cite{zhong2023blockchain}. 
Vehicle Twins (VTs) are digital replicas of physical vehicles within 3D virtual environments, offering accurate lifecycle representations and managing vehicular applications like large Artificial Intelligence (AI) model-based Augmented Reality (AR) navigation \cite{tong2024diffusion}.
In 6G-enabled Vehicular Metaverses, vehicular users can receive large AI model-based vehicular services called VT tasks by constructing and updating the VTs in nearby terrestrial infrastructure like ground Base Stations (BSs). 
Given limited coverage of ground BSs and rapid movement of vehicles, a single ground BS is unable to continuously provide immersive vehicular services to vehicular users~\cite{luo2023privacy}. Consequently, to ensure an uninterrupted and immersive vehicular user experience, VT tasks must be migrated to the next ground BS as the vehicle is about to exit the coverage of the current ground BS~\cite{chen2023multiple}. Recently, integrating Vehicular Metaverses into urban environments represents a significant advancement in immersive experiences, where users can interact with digital content in real time through large AI model-based
vehicular systems \cite{10401029}.
However, as the scale of the Vehicular Metaverse expands and the number of vehicular users increases, a critical challenge emerges when VT tasks are migrated, particularly large AI model-based VT tasks, i.e., network resource limitation. This issue becomes particularly prominent in areas with high vehicular user density, such as intersections, urban zones, and commercial zones, where seamless communication and immersive Vehicular Metaverse services are required.
To address these challenges, air-ground integrated networks, composed of Unmanned Aerial Vehicles (UAVs) and ground BSs, provide dynamic resource allocation, expanding network coverage and alleviating communication bottlenecks in densely populated zones \cite{pang2021uav,kang2024uav}. 
Therefore, the capability of air-ground integrated networks to dynamically manage resources can particularly enhance service quality in user-dense regions, offering timely responses and improving the quality of experience of vehicular users.

In air-ground integrated networks, ground BSs and UAVs serve as critical resource providers, delivering computing and communication resources to support executing VT tasks migration within 6G-enable Vehicular Metaverses \cite{dong2021uavs}. These VT task migrations often require continuous and high-quality resource allocation, especially for large AI model-based VT tasks, which is essential for maintaining an immersive experience for vehicular users. However, the dynamic nature of resource demand in vehicular environments presents challenges, e.g., the information asymmetry between ground BSs and UAVs regarding the demand for VT task migration. Specifically, ground BSs can utilize historical data to accurately predict and assess the resource requirements of vehicular users with their fixed infrastructures and stable operations \cite{noor2020survey}. Although UAVs can offer flexible and dynamic coverages, they face challenges in collecting real-time feedback due to their mobility. Therefore, this mobility results in delayed information gathering (i.e., the demand for VT task migration), leading to less precise estimations of resource valuation compared to ground BSs \cite{mozaffari2019tutorial}.

To effectively address the resource allocation challenges in air-ground integrated networks, where ground BSs and UAVs serve as critical resource providers for vehicular users executing VT tasks in Vehicular Metaverses, it is essential to design mechanisms that mitigate the information asymmetry between these resource providers \cite{zhu2022traffic}. Traditional auction mechanisms, such as first-price or second-price auctions, often lead to inefficient resource allocation due to strategic bidding and adverse selection, which is not suitable for solving the resource allocation challenges in air-ground integration networks. The Modified Second Bid (MSB) auction offers significant advantages by incorporating a price scaling factor that ensures more efficient resource pricing and allocation. As introduced in \cite{arnosti2016adverse}, the MSB auction is strategy-proof and effectively eliminates adverse selection, making it a robust solution in environments characterized by information asymmetry \cite{xu2024cached}. 
Therefore, we propose a new MSB auction mechanism that evaluates resource allocation based on two key metrics, i.e., the latency of processing VT tasks and the precision of these VT task outcomes (e.g., resolution or pixel accuracy), which represent the common value and match value of resource provider, respectively.

The rise of Generative Artificial Intelligence (GAI) presents transformative potential, surpassing conventional AI paradigms by incorporating a wide range of models and methodologies, e.g., Transformer, Generative Adversarial Networks (GANs), and Generative Diffusion Models (GDMs) \cite{du2024enhancing}. Among these, GDMs are particularly noteworthy due to their unique approach to data generation and their ability to model complex data distributions with high precision \cite{10419041}. By incorporating the diffusion model and Reinforcement Learning (RL), we propose a Diffusion-based RL algorithm, which is a generative approach that excels at learning and representing intricate environments through noise injection and denoising processes \cite{du2024enhancing}. In this paper, we employ the Diffusion-based RL algorithm to solve the MSB auction problem, optimizing the price scaling factor to maximize the surplus of resource providers and minimize the latency of VT tasks. The main contributions of this paper are summarized as follows. 
 \begin{itemize}
     \item We propose a new resource allocation framework to address information asymmetry between UAVs and ground BSs due to the demand for VT task migration, providing an immersive experience for vehicular users in 6G-enabled Vehicular Metaverses.
 \end{itemize}
\begin{itemize}
    \item In this framework, we design a new MSB auction, considering the VT task latency and the accuracy of these VT tasks (i.e., pixel points or resolution). Additionally, we prove that the proposed MSB auction is strategy-proof and effectively eliminates adverse selection, making it a robust solution in environments characterized by information asymmetry.
\end{itemize}
\begin{itemize}
    \item Unlike traditional DRL, we propose a Diffusion-based RL algorithm that has powerful learning and characterization capabilities, applying the MSB auction system to maximize the total surplus of resource providers and minimize the average latency of VT tasks.
\end{itemize}

The rest of this paper is organized as follows. In Section \ref{Related Work}, we provide a review of related works. Section \ref{System Model} presents the system model for handling large AI model-based VT tasks within 6G-enabled Vehicular Metaverses. In Section \ref{Market Design}, we formulate the problem and propose the market design. In Section \ref{Diffusion-based}, we introduce the proposed Diffusion-based MSB auction. Section \ref{experiment} discusses the simulation setup and results, followed by the conclusion in Section \ref{conclusion}.
\section{Related Works}\label{Related Work}
\subsection{Vehicular Metaverses}
The term ``metaverse" was first created by Neal Stephenson in the science fiction novel named \textit{Snow Crash} in 1992 \cite{stephenson1994snow}. By establishing an interconnected and immersive universe, the metaverse establishes a framework that fosters real-time communication and dynamic interactions with digital artifacts \cite{mystakidis2022metaverse}. The metaverse was derived from the fusion of the prefix ``meta" (signifying transcendence), and the suffix ``verse" (abbreviated form of the universe), which denotes a computer-generated realm that parallels the real world \cite{wang2022survey}. Currently, utilizing the metaverse finds implementation across various industries, notably within the automotive sector. Amidst the swift advancement of technologies encompassing AI, blockchain, and 6G, the attainment of an immersive Vehicular Metaverses experience emerges as a pivotal trend in the prospective evolution of the intelligent connected automobile industry \cite{10401029}. The authors in \cite{zhou2022vetaverse} introduced the term ``Vetaverse", which amalgamates ``vehicle" and ``metaverse", symbolizing the convergence of the automotive industry and the metaverse domain. In transportation systems, Vehicular Metaverses can provide fully immersive and hyper-realistic metaverse services to metaverse users inside the vehicles through ground BSs, e.g., passengers can not only observe the scenery outside the vehicles but also see AR recommendations covering real-world landmarks and landscapes through the Head-Up Displays (HUDs) \cite{xu2023generative}. The Vehicular Metaverse is delineated as an immersive integration of vehicular communication that amalgamates the virtual space with real-world data \cite{xu2023epvisa}. It is esteemed as an emergent paradigm entailing profound fusion between the Internet of Vehicles and the metaverse domain, thereby furnishing metaverse users with nascent vehicular virtual services. 

The physical-virtual synchronization system of the Vehicular Metaverse involves two phases. In the Physical-to-Virtual (P2V) phase, vehicles use sensors to perceive the environment and update the Digital Twins (DTs) in the virtual world with historical sensor and passenger biometric data \cite{xu2023epvisa}. In the Virtual-to-Physical (V2P) phase, the resource providers provide metaverse services to metaverse users through DTs, bridging the virtual and physical realms \cite{xu2023epvisa}. DTs are virtual counterparts that faithfully represent the complete life cycle of physical objects in a virtual environment \cite{juarez2021digital}. Similarly, VTs serve as comprehensive digital counterparts, encompassing the lifespan of different immersive metaverse services such as large AI model-based AR navigation, which can meet the requirements of metaverse users. The dynamic physical world of vehicles encompasses critical information and attributes of tangible entities, requiring continuous updates to the real-world characteristics of VTs within the virtual environment \cite{du2023yolo}. 
\subsection{Resource Allocation in 6G-enabled Vehicular Metaverses}
Providing services to metaverse users by constructing and updating VTs requires a lot of resources, including computing resources to perform intensive tasks, storage resources, and bandwidth resources necessary to maintain ultra-high-speed and low-latency connections \cite{alkhoori2024latency}. Consequently, addressing resource allocation challenges is crucial to incentivize resource providers in 6G-enabled Vehicular Metaverses. 
In \cite{zhong2023blockchain}, the authors presented a framework that leverages both blockchain technology and game theory to optimize the allocation of resources for VTs migration in vehicular networks. In \cite{wen2023task}, the authors proposed an incentive mechanism using the Age of Migration Task theory to ensure efficient bandwidth distribution for VTs migration in the Vehicular Metaverses. Although both studies provide theories such as game theory and contract theory to solve the resource allocation problem in Vehicular Metaverses, they do not consider the resource allocation in 6G-enabled Vehicular Metaverses.
In \cite{10373683}, the authors presented a novel approach for efficiently supporting metaverse streaming over 6G mobile networks by employing Neyman-Pearson criterion-driven Network Functions Virtualization (NFV) and Software-Defined Networking (SDN) architectures, to address the challenges of delay and bandwidth demands. However, this work primarily focuses on network optimization rather than a market-based resource allocation mechanism. 
Auction mechanisms have been widely recognized for their advantages in efficiently allocating resources, particularly in dynamic and competitive environments.
The authors in \cite{9838736} proposed a novel learning-based incentive mechanism framework aimed at evaluating and enhancing the immersive experience of Virtual Reality (VR) Vehicular Metaverse Users (VMUs) in the wireless edge-empowered metaverse. They introduced a utilization of a double Dutch auction mechanism for matching and pricing VR services between VR VMUs and VR service providers. In \cite{10225312}, the authors designed an auction mechanism grounded in the Myerson theorem and deep learning to facilitate edge computing transactions between virtual service providers and an edge computing provider. However, these existing auction mechanisms cannot fully address the information asymmetry between ground BSs and UAVs in 6G-enabled Vehicular Metaverses due to challenges including strategic bidding and adverse selection. To address these limitations, we propose a new MSB auction, which incorporates a price scaling factor to ensure more efficient resource pricing and allocation, effectively mitigating strategic bidding and adverse selection.
\subsection{Generative Models in Auction Mechanisms}
GAI introduces transformative potential, surpassing conventional AI paradigms by integrating various models and methodologies, such as Transformers, GANs, and GDMs~\cite{du2024enhancing}. Among these, GDMs stand out for their unique data generation process and their ability to model complex data distributions with high precision \cite{10419041}. GDMs work by adding noise to data and then gradually denoising it, making them highly effective in modeling intricate and dynamic environments~\cite{san2021noise}. This step-by-step approach allows GDMs to capture uncertainties and long-term dependencies, particularly suited to auctions, i.e., predicting bidder behavior and optimizing auction strategies in highly competitive and unpredictable environments. In \cite{tong2024multi}, the authors introduced an attribute-aware auction-based mechanism for optimizing resource allocation during the migration of VTs in Vehicular Metaverses, utilizing a two-stage matching process and a GPT-based DRL auctioneer. In\cite{10.1145/3637528.3671526}, the authors proposed DiffBid, a conditional diffusion modeling approach for auto-bidding in online advertising, addresses the limitations of traditional systems based on Markov Decision Processes (MDPs), which struggle in dynamic, long-term environments. DiffBid models the entire bid trajectory and proves its effectiveness with experiments, showing the strong ability of GDM to generate more accurate bids in a dynamic environment. Despite these advancements, the application of Diffusion-based MSB auctions has to be explored to address resource allocation challenges (i.e., the information asymmetry between UAVs and ground BSs due to dynamic demands of VT tasks) in 6G-enabled Vehicular Metaverses. This gap presents an opportunity to leverage Diffusion-based MSB auctions to enhance the efficiency of resource distribution within 6G-enabled Vehicular Metaverses. 
\section{System Model}\label{System Model}
The system comprises $N+1$ computing and communication resource providers, including one or several UAVs and multiple ground BSs equipped with edge servers, as shown in Fig.~\ref{framework_system}. The resource providers are represented by the set $\mathcal{N}=\{0,1,\cdots,N\}$, where 0 represents the UAV and $\{1,2,\cdots,N\}$ represents the set of ground BSs. Each ground BS has different computing resources (e.g., CPUs and GPUs). The GPU computational efficiencies of ground BSs are denoted by the set $G=\{f_1^g,\cdots,f_n^g,\cdots,f_N^g\}$, and their CPU capabilities are represented by the set $C=\{f_1^c,\cdots,f_n^c,\cdots,f_N^c\}$. The available bandwidth allocated by UAV 0 and ground BS $n=\{1,2,\cdots, N\}$, is denoted as $B_0$ and $B_n$, respectively, where $B_n=\begin{pmatrix}B_n^u,B_n^d\end{pmatrix}$ consists of uplink bandwidth $B_n^u$ and downlink bandwidth $B_n^d$. 
Vehicular users can enter the 6G-enabled Vehicular Metaverse to access various real-time large AI model-based vehicular services, such as AR navigation, pedestrian detection, and 3D entertainment. The set of vehicular users, that request the processing of VT tasks, is represented by $\mathcal{M}=\{1,\cdots,m,\cdots,M\}$. Specifically, at time slot $t$, when vehicles leave the coverage of the current ground BS, vehicular users send requirements for processing VT tasks, which are handled by either a UAV or one of the ground BSs, depending on resource availability of resource providers and the Diffusion-based MSB (DMSB) auction results. 
In our system model, an accuracy of processing VT tasks required by vehicular user $m$ is considered fixed and independent of the computing resources or bandwidth used~\cite{singh2022ai}. For example, we consider a large AI model-based VT task for pedestrian detection required by vehicular user $m$. In this task, the camera of a vehicle captures frames of its surrounding environment, consisting of multiple pixel points. We denote $\Omega_m=\{w_m^1,w_m^2,\cdots,w_m^k\}$ as the visual information required to be processed by a UAV or a ground BS. The UAV or ground BS processes these pixel points, resulting in a set of processed pixel points $\Omega_n=\{c_n^1, c_n^2,\cdots, c_n^k\}$. The accuracy of the VT tasks depends on whether each processed pixel point $c_n^i$ corresponds to its original pixel $w_m^i$. We then show more detail about the network, latency, and accuracy models.
\begin{figure*}[t]
    \centering
    \includegraphics[width=0.99\linewidth]{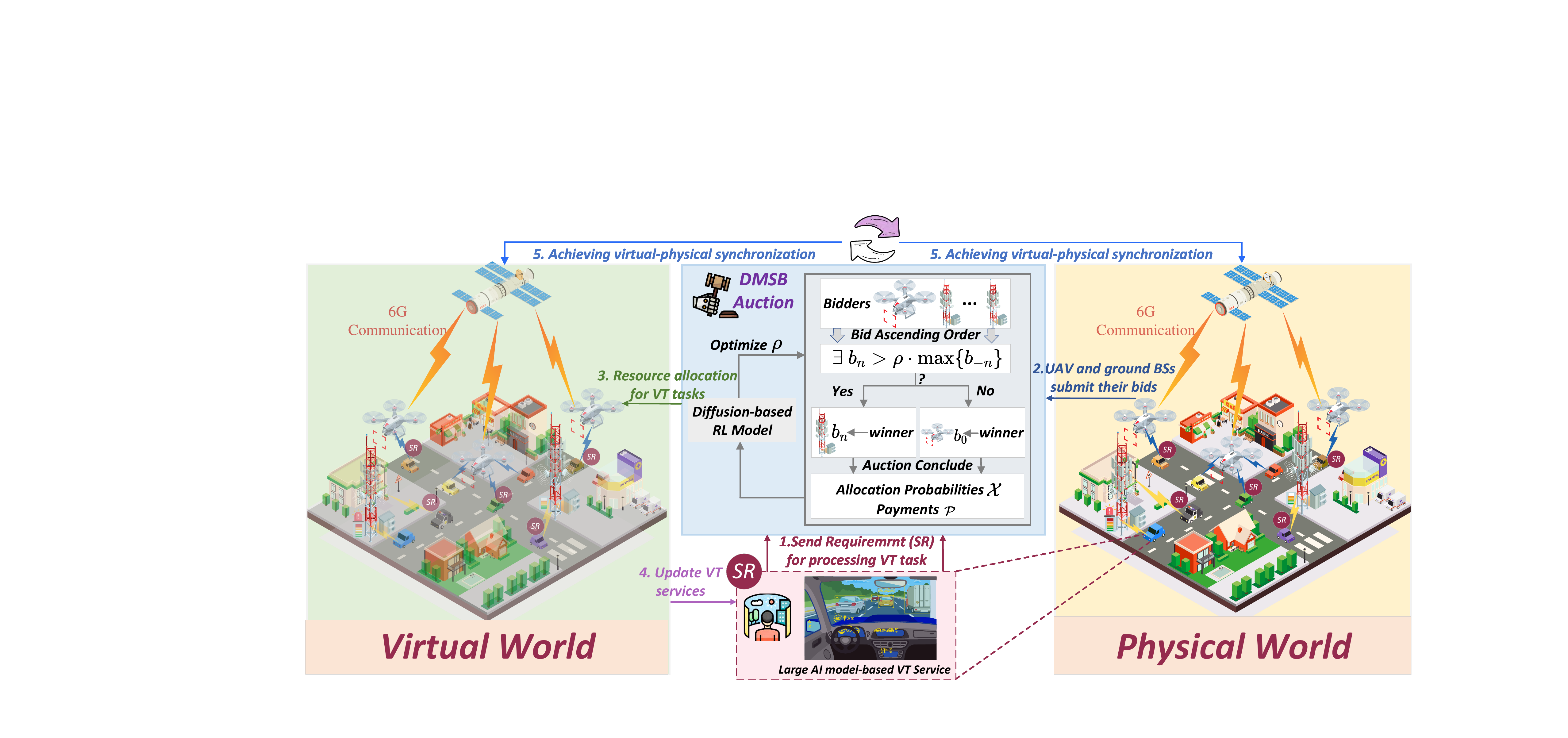}
    \caption{The system model of our proposed DMSB auction for resource allocation in 6G-enabled Vehicular Metaverses.}
    \label{framework_system}
\end{figure*}
\subsection{Network Model}
Vehicular users can access large AI model-based vehicular services via resource providers such as a UAV or ground BSs for data transmission. The uplink and downlink transmission rates determine the speed at which data can be offloaded to the resource providers and received back by the vehicular users. These rates depend on factors such as the channel power gains $g_{m,0}$ for UAV 0 and $g_{m,n}$ for ground BS $n=\{1,2,\cdots, N\}$, as well as the allocated bandwidth and transmit power. The uplink transmission rate for vehicular user $m$ to transmit input data of VT tasks to resource providers is defined as
 \begin{equation}
R_{m,n}^u=B_n^u\log_2\left(1+\frac{g_{m,n}p^u}{\sigma_n^2}\right),
 \end{equation}
where $p^u$ denotes the transmit power of vehicular user $m$, and $\sigma_n^2$ is the additive Gaussian white noise at ground BSs or a UAV~\cite{xu2023reconfiguring}. Similar to the uplink transmission rate, the downlink transmission rate at which the UAV or the ground BS transmits the result of the large AI model-based VT task to vehicular user $m$ is defined as
\begin{equation}
R_{m,n}^d=B_n^d\log_2\left(1+\frac{g_{m,n}p^d}{\sigma_m^2}\right),
\end{equation}
where $p^d$ is the transmit power of a UAV or ground BSs, and $\sigma_m^2$ denotes the additive Gaussian white noise at vehicular user $m$.
\subsection{Latency Model}
Before vehicular users can access immersive large AI model-based vehicular services, they need to offload a VT task with an input size of $D_{m}^{req}$. In wireless communication~\cite{10515202}, the latency generated by vehicular user $m$ offloading VT tasks inputs to resource providers (i.e., a UAV or ground BSs) at time slot $t$ depends on the uplink transmission $R_{m,n}^{u}$. Therefore, the uplink transmission latency for offloading the VT task input from vehicular user $m$ is calculated as 
\begin{equation}
    T_{m,n}^u=\frac{D_m^{req}}{R_{m,n}^u}.
\end{equation}
When a UAV or ground BSs receive the VT task $D_{m}^{req}$ from vehicular user $m$, it must process it, such as navigation or pedestrian detection, which often requires graphics rendering. Therefore, the processing latency of the VT task is determined by the available computation resources owned by a UAV or ground BSs, including GPU and CPU resources. Consequently, the processing latency is defined as
\begin{equation}
    T_{m,n}^p=\frac{D_m^{req}e^p}{f_n^g}+\frac{D_m^{req}}{f_n^c},
\end{equation}
where $e_p$ is the GPU resource required to process each bit of data, $f_n^g$ is the computational efficiency of GPUs and $f_n^c$ is the computational efficiency of CPUs. After processing the VT task $D_m^{req}$, the UAV or ground BS will send the completed task results back to vehicular user $m$. Specifically, when UAV or ground BS finished processing the VT task $D_m^{req}$, the UAV or ground BS generates a VT task of size $D_n^{task}$ according to the VT task $D_m^{req}$ input. Therefore, the downlink transmission latency that depends on $D_n^{task}$ and the downlink transmission rate $R_{m,n}^{d}$ is defined as
\begin{equation}
    T_{m,n}^d=\frac{D_n^{task}}{R_{m,n}^d}.
\end{equation}
The total latency of the VT task at time slot $t$ can be calculated as
\begin{equation}
    T_{m,n}^{t}=T_{m,n}^u+T_{m,n}^p+T_{m,n}^d.
\end{equation}
\subsection{Accuracy Model}
In the proposed system, accuracy is a key metric for evaluating the degree of pixel point matching in large AI model-based VT tasks. We consider a large AI model-based VT task (e.g., pedestrian detection task) requested by vehicular user $m$, which needs to be processed by a UAV or a ground BS. Each vehicular user $m$ captures images of its environment through a camera, consisting of $k$ pixel points denoted as $\Omega_m=\{w_m^1,w_m^2,\cdots,w_m^k\}$. These pixel points represent the visual information that need to be processed by a UAV or ground BSs. After the task is processed, the UAV or ground BS generates a set of processed pixel points $\Omega_n=\{c_n^1, c_n^2,\cdots, c_n^k\}$. To evaluate the accuracy of the VT task~\cite{singh2022ai}, we define pixel point matching count $D_m(\Omega_n,\Omega_m)$ as follows:
\begin{equation}
D_m(\Omega_n,\Omega_m)=\sum_{i=1}^k\mathbf{I}\{w_m^i=c_n^i\mid w_m^i\in\Omega_m,c_n^i\in\Omega_n\},
\end{equation}
where $\mathbf{I}\{\cdot\}$ is an indicator function that is equal to 1 if the condition is true, and 0 otherwise. Therefore, the accuracy of the VT task processing $R_{m}$ can be quantified by the proportion of correctly matched pixel points, which is defined as
\begin{equation}
    R_m=\frac{D_m(\Omega_n,\Omega_m)}k,
\end{equation}
where $R_{m}\in [0,1]$. The accuracy $R_{m}$ measures the degree to which the processed pixel points by the UAV or ground BS align with the original pixel points, indicating the effectiveness of the VT task processing.
\begin{figure*}[t]
    \centering
    \includegraphics[width=0.97\linewidth]{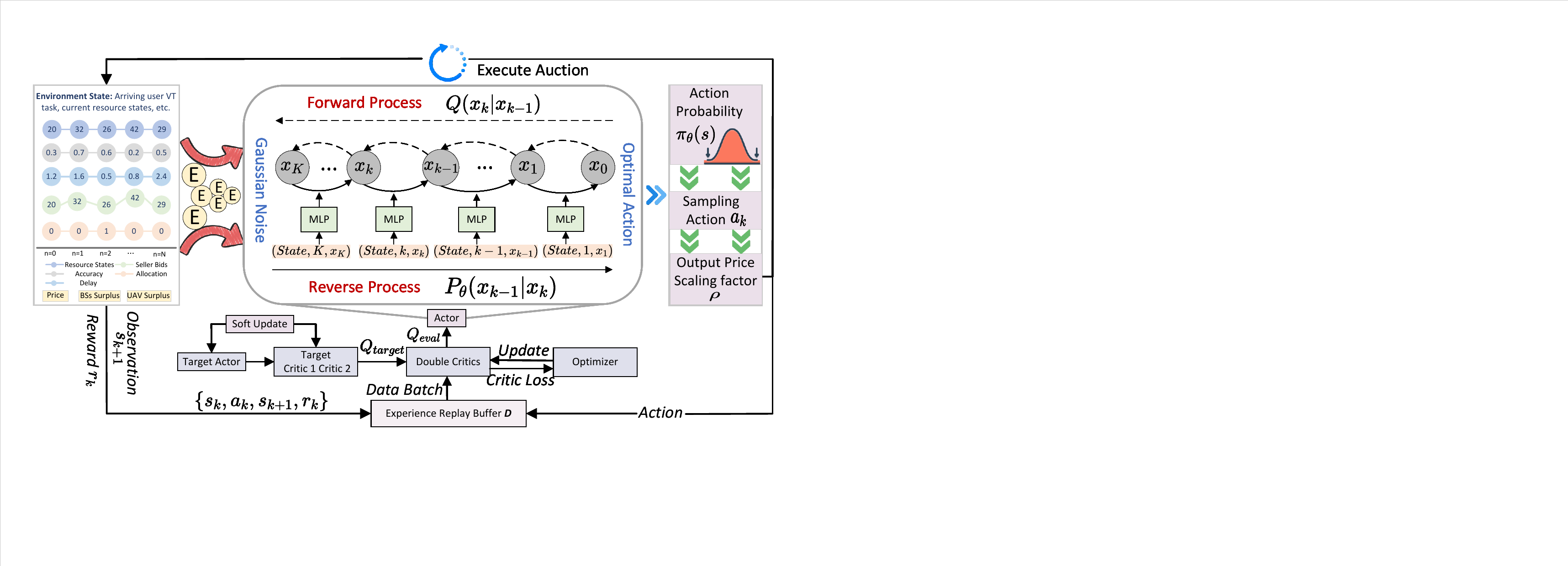}
    \caption{The architecture of the Diffusion-based RL algorithm for dynamic price scaling in the MSB auction.}
    \label{framework_algorithm}
\end{figure*}

\begin{algorithm}
\caption{Diffusion-based Modified Second-Bid (DMSB) Auction}
\label{alg:DMSB}
\begin{algorithmic}[1]
\STATE \textbf{Input:} Bids $b = \{b_0, b_1, \cdots, b_N\}$ from resource providers (UAV, ground BSs), diffusion step $K$, soft target update parameter $\varpi$.
\STATE \textbf{Output:} The price scaling factor $\rho$, allocation probabilities $\mathcal{X}$ and payments $\mathcal{P}$.
\STATE \textbf{Initialize:} Actor network parameters $\theta$, critic network parameters $\phi$, target actor network parameters $\theta^{\prime}$, target critic network parameters $\phi^{\prime}$, and experience replay buffer $\mathcal{D}$.
\FOR {episode $e = 1$ to $E$}
    \STATE Initialize the distribution $\boldsymbol{x}_T\sim\mathcal{N}(\mathbf{0},\mathbf{I})$.
    \FOR {iteration $k = 1$ to $\kappa$}
        \STATE Receive bids $b$ from resource providers and observe state $s_k = \{m^k, b^k\}$.
        \STATE Set $x_{K}$ as Gaussian noise and generate $x_{0}$ by denoising $x_{K}$ based on Eq.(\ref{reverse}).
        \STATE Execute the action $a_{k}$ based on $x_{0}$ and Eq.(\ref{probability}).
        \STATE Calculate the price scaling factor $\rho = 10^{a_k / |\mathcal{A}|}$.
        \STATE Execute $\rho$ in the auction environment, and observe the next state $s_{k+1}$ and current reward $r_k$.
        \STATE Store transition $(s_k, a_k, s_{k+1}, r_k)$ to the experience replay buffer $\mathcal{D}$.
        \STATE Sample a mini-batch of experiences $\hat{D}_k$ from the experience replay buffer $\mathcal{D}$.
        \STATE Update the actor network parameters $\theta$ using $\hat{D}_k$ by Eq.(\ref{theta}).
        \STATE Update the critic network parameters $\phi$ by minimizing the object function based on  Eq.(\ref{min}).
        \STATE Soft update the target actor network parameters $\theta^{\prime}$ by $\theta^{\prime}\leftarrow\varpi \theta+(1-\varpi )\theta^{\prime}$.
        \STATE Soft update the target critic network parameters $\phi^{\prime}$ by $\phi^{\prime}\leftarrow\varpi \phi+(1-\varpi )\phi^{\prime}$.
    \ENDFOR
\ENDFOR
\end{algorithmic}
\end{algorithm}

\section{Market Design And Problem Formulation}\label{Market Design}
%
\subsection{Market Design}
To incentivize resource providers to handle VT tasks, especially some large AI model-based VT tasks for vehicular users in 6G-enabled Vehicular Metaverses, we design a novel VT task market. In this market, there are three main entities: the auctioneer, vehicular users, and resource providers. 
\begin{itemize}
    \item Auctioneer: The auctioneer serves as the market's transaction platform, which is responsible for managing the distribution of VT tasks. The auctioneer receives VT task requests from vehicular users and makes these demands available to resource providers. Additionally, the auctioneer executes the auction process, ensuring that resources are efficiently allocated to the appropriate resource providers.
\end{itemize}
\begin{itemize}
    \item Vehicular Users: Vehicular users are the initiators of VT tasks, who submit VT task processing requirements to the auctioneer with the expectation that resource providers can complete VT tasks with the lowest latency and the highest accuracy. However, they do not participate in bidding in the market because their primary role is to send VT task requirements, while the auction mechanism is designed to optimize resource allocation among competing resource providers based on these requirements.
\end{itemize}
\begin{itemize}
    \item Resource Providers: Resource providers, including a UAV and serval ground BSs, act as sellers (i.e., bidders) within the market, competing to offer VT task processing for vehicular users to obtain revenue. Additionally, we consider that resource providers are risk-neutral bidders, with their surpluses being positively correlated according to the revelation principle~\cite{arnosti2016adverse}.
\end{itemize}

In 6G-enabled Vehicular Metaverses, vehicular users can access VT task processing by requesting resources from the resource providers. In the VT task market, the valuation $v_{n}$ for resource provider $n$ for the opportunity to process a VT task is the product of common value $c_{n}$ and matching value $m_{n}$. The common value $c_{n}$ reflects the average resource consumption of a UAV or a ground BS when processing VT tasks, based on the expected latency over the past $T$ time periods, which is defined as
\begin{equation}
    c_n = \mathbb{E}_{t\in T} \left[ \frac{\omega_1}{T^t_{m,n}} \right],
\end{equation}
where $\omega_1$ is a latency-related value factor. To capture the variability in accuracy across different VT tasks over the past $T$ time periods, we model the matching value $m_{n}$ as an expected value based on the accuracy $R_m$, which is calculated as
\begin{equation}
    m_n = \mathbb{E}_{t\in T} \left[ \frac{\omega_2}{(1-R_{m})^\beta} \right],
\end{equation}
where $\beta$ is a parameter controlling the sensitivity of the matching value to changes in accuracy and $\omega_2$ is a constant that scales the impact of accuracy on the matching value. Therefore, the valuation $v_{n}$ of resource provider $n$ can be calculated as
\begin{equation}
    v_{n}=\mathbb{E}_{t\in T} \left[ \frac{\omega_1}{T^t_{m,n}} \right] \cdot \mathbb{E}_{t\in T} \left[ \frac{\omega_2}{(1-R_{m})^\beta} \right].
\end{equation}
When the resource provider $n$ values the VT task,  the common value $c_n$ and matching value $m_n$ are considered independent~\cite{arnosti2016adverse}. This means that the resource consumption, represented by the common value, is independent of the quality of task processing, which is represented by the matching value.

\subsection{Problem Formulation}
In the proposed VT task market, we design a mechanism $\mathcal{M}=(\mathcal{X},\mathcal{P})$, which maps the private valuation $v_n$ of resource providers into the VT task allocation probabilities $\mathcal{X}$ and payments $\mathcal{P}$. In this mechanism, resource providers submit their bids $b=\{b_0,b_1,\cdots,b_N\}$ based on their valuation $v_n$, representing their willingness to process VT tasks. Then, the auctioneer determines the allocation probabilities $\mathcal{X}=\{\mathcal{X}^{UAV}(b_0),\mathcal{X}^{BS}(b_1),\ldots,\mathcal{X}^{BS}(b_N)\}$ and the payments $\mathcal{P}=\{\mathcal{P}^{UAV}(b_0),\mathcal{P}^{BS}(b_1),\ldots,\mathcal{P}^{BS}(b_N)\}$ based on these bids, where $\mathcal{X}^{UAV}(b_0)$ indicates the probability that the VT task is allocated to a UAV, and $\mathcal{X}^{BS}(b_n)$ denotes the probability of allocating the VT task to ground BS $n$. The expected surplus depends on the realization of the valuations and the allocation, for UAV is $S^{UAV}=\mathbb{E}[v_0\cdot \mathcal{X}^{UAV}(b_0)]$, for ground BSs are calculated as $S^{BSs}=\mathbb{E}[\sum_{n=1}^Nv_n\cdot \mathcal{X}^{BS}(b_n)]$. The objective of the auctioneer is to maximize the total expected surplus from the VT task allocation, which can be formulated as
\begin{subequations}
\begin{align} 
\max_{\mathcal{M}} \quad &\zeta \cdot S^{UAV} + S^{BSs} \\
\text{s.t.} \quad & b_n = v_n, \label{13} \\ 
& \sum_{n=1}^{N} \mathcal{X}^{BS}(b_n) + \mathcal{X}^{UAV}(b_0) = 1, \label{eq14} \\ 
& \mathcal{X}^{BS}(b_n) \in \{0, 1\}, \quad n \neq 0, \label{15} \\ 
& \mathcal{X}^{UAV}(b_0) \in \{0, 1\}. \label{16}
\end{align}
\end{subequations}
Constraint~(\ref{13}) indicates that resource providers submit bids based on their valuation. Constraint~(\ref{eq14}) ensures that a VT task is allocated to only one resource provider. Constraints~(\ref{15}) and~(\ref{16}) indicate whether a VT task is fully allocated to the ground BS or a UAV.

\section{Diffusion-based Modified Second-bid Auction}\label{Diffusion-based}
To address the challenges of VT task resource allocation in 6G-enable Vehicular Metaverses, we propose the DMSB auction mechanism, which optimizes the bidding process for resource providers to alleviate information asymmetry among them. We first introduce the MSB auction~\cite{arnosti2016adverse}, in which the winning bidders and payment depend on a price scaling factor $\rho$. Then, we introduce the proposed Diffusion-based RL algorithm to dynamically adjust the price scaling factor $\rho$~\cite{du2024diffusion}, maximizing the total surplus in resource providers. Finally, we prove that the proposed DMSB auctions are fully strategy-proof and adverse-selection-free.
\subsection{Modified Second-Bid Auction}
In the MSB auction, the auctioneer collects bids $b=\{b_0,b_1,\cdots,b_N\}$ from resource providers, including the UAV acting as brand bidder and ground BSs acting as performance bidders. The brand bidder refers to the UAV, which like traditional brand advertisers~\cite{arnosti2016adverse}, focuses on broad objectives such as improving coverage or network reliability, but cannot accurately assess the value of individual VT tasks. Conversely, performance bidders are ground BSs, can evaluate each VT task based on key performance metrics~\cite{arnosti2016adverse}, such as low latency and high accuracy, and bid accordingly.
The MSB auction allocates VT tasks by comparing the submitted bids and determining the winner based on a price scaling factor $\rho$. Specifically, the highest ground BS wins the VT task if it exceeds the second-highest bid by at least $\rho$, and the payment is determined by scaling the second-highest bid with $\rho$. If no ground BS satisfies this condition, the VT task is allocated to the UAV, and the contracted payment is determined by maximizing the expected profit on the highest possible value of $b_{0}$~\cite{arnosti2016adverse}, calculated as
\begin{equation}
    b_0=\max_b\mathbb{E}\begin{bmatrix}(v_0-v_{max})1(v_{max}\leq b_0)\end{bmatrix},
\end{equation}
where $v_0$ is the valuation of the VT task measured by the UAV, and $v_{max}$ is the highest valuation. Therefore, the allocation rule and the pricing rule can be expressed as follows.
\begin{itemize}
    \item Allocation rule: For performance bidders (i.e., ground BSs), the allocation probability $\mathcal{X}^{BS}(b_n)$ is calculated as
    \begin{equation}
        \mathcal{X}^{BS}(b_n)=
        \begin{cases}1, & \text{ if } b_n>\rho\cdot\max\{b_{-n}\}  \\
        0,&\text{ otherwise.}
        \end{cases}
    \end{equation}
\end{itemize}
In this allocation rule, the performance bidder $n$ wins the MSB auction and is allocated the VT task if its bid $b_n$ exceeds the maximum bid $b_{-n}$ of all other providers by a price scaling factor $\rho$ and $\rho \ge 1 $. Otherwise, the brand bidder (i.e., UAV) wins the MSB auction if no performance bidder wins. Therefore, the allocation probability of the UAV can be calculated as $\mathcal{X}^{UAV}(b_0)\le 1-\sum_{n=1}^{N} \mathcal{X}^{BS}(b_n)$.
\begin{itemize}
    \item Pricing rule: The payment $\mathcal{P}=\{\mathcal{P}^{UAV}(b_0),\mathcal{P}^{BS}(b_1),\ldots,\mathcal{P}^{BS}(b_N)\}$ for the resource providers is based on the second-highest bid scaled by $\rho$. Specifically, the payment rule for performance bidder can be calculated as
    \begin{equation}
                \mathcal{P}^{BS}(b_n) = \mathcal{X}^{BS}(b_n) \cdot \rho \cdot\max\{b_{-n}\} .
    \end{equation}
    For band bidders, if the UAV wins the auction, its payment is equal to its contracted payment $b_0$, i.e., $\mathcal{P}^{UAV}(b_0)=\mathcal{X}^{UAV}(b_0) \cdot b_{0}$.
\end{itemize}
\begin{figure}
    \centering
\includegraphics[width=0.6\linewidth]{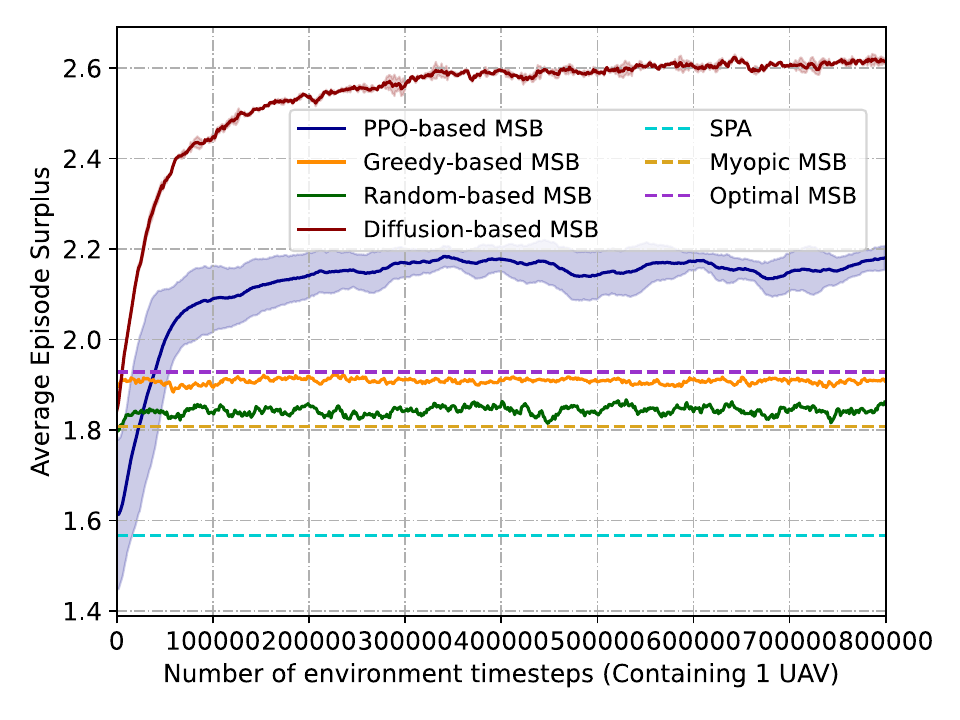}
    \caption{Convergence of the DMSB auction scheme compared with PPO, Greedy, Random, and theoretical auction methods.}
    \label{fig_cove_all}
\end{figure}

\subsection{Diffusion-based Pricing Scaling Factor}
In this section, we introduce the Diffusion-based RL approach to dynamically adjust the price scaling factor $\rho$, as shown in Fig.~\ref{framework_algorithm}, which influences the allocation and payment rules in DMSB auctions. By leveraging the Diffusion-based RL model, the auctioneer learns and adapts $\rho$ based on historical bidding data, real-time system load, and market dynamics, to maximize the total surplus across resource providers.

The Diffusion-based RL determines MSB's pricing scaling factor by formulating it as a Markov Decision Process (MDP), where the auction environment consists of states, actions, and rewards. During each auction round, the auctioneer receives bids from resource providers (i.e., a UAV and ground BSs). The state space $s_k$ at timestep $k$ consists of the current market conditions $m^k$ and the submitted bids $b^k$, i.e., $s_{k}=\left \{ m^k,b^k \right \} $. At each timestep, the auctioneer's action is to adjust the price scaling factor $\rho$, defined as $a_k\in\mathcal{A}, \rho=10^{a_k/|\mathcal{A}|}$, where $|\mathcal{A}|$ represents the size of the action space. The reward for each step corresponds to the total surplus generated in the MSB auction, which can be expressed as 
$r_k=\mathbb{E}\left[\zeta \cdot S^{UAV} + S^{BSs}\right]$, where $S^{UAV}$ and $S^{BSs}$
are the surplus from UAV and ground BSs, respectively.

The auctioneer employs a diffusion model to learn a likelihood distribution of potential actions, which influences the value of $\rho$. The diffusion model consists of two phases: a forward process and a reverse process. In the forward diffusion process, Gaussian noise is added progressively to the original input state over multiple timesteps, enabling exploration of a wide range of possible actions by gradually transforming the state representation. At each timestep $k$, the input data $x_k$ is generated from the previous state $x_{k-1}$ by applying Gaussian noise~\cite{du2024diffusion}:
\begin{equation}
    Q(x_k|x_{k-1})=\mathcal{N}(x_k;\sqrt{1-\eta_k}x_{k-1},\eta_kI),
\end{equation}
where $\eta_k$ controls the variance of the noise, and $I$ represents the identity matrix. 
Once the forward process reaches the terminal state, the reverse diffusion process begins. In this stage, the model denoises the data, gradually reconstructing the optimal action (i.e., the optimal $\rho$) from the noisy state. The reverse process is defined as:
\begin{equation}\label{reverse}
    P_\theta(x_{k-1}|x_k)=\mathcal{N}(x_{k-1};\Lambda_\theta(x_k,k,s),\sigma_k^2I),
\end{equation}
where $\Lambda_\theta$ is the learned mean function conditioned on the noisy data $x_k$, the current state $s$, and the timestep $k$. 
After completing the reverse process, the final output, $x_0$, is transformed into a probability distribution over the action space using the softmax function:
\begin{equation}\label{probability}
    \pi_\theta(s)=\left\{\frac{e^{x_0^i}}{\sum_{k=1}^{\mathcal{A}}e^{x_0^k}},\forall i\in\mathcal{A}\right\},
\end{equation}
where each value in $\pi_{\theta}(s)$ represents the likelihood of choosing a particular price scaling factor $\rho$.

During the interaction between the learning agent (i.e., the auctioneer) and the MSB environment, it collects transitions $(s_{k}, a_{k}, s_{k+1}, r_{k})$. These transitions are stored in an experience replay buffer $D$, enabling the agent to revisit past experiences and improve its policy through iterative learning. 
To evaluate the quality of the selected actions and ensure that the chosen price scaling factor $\rho$ maximizes the total surplus, we incorporate two critic networks, $Q_{\phi_1}(s, a)$ and $Q_{\phi_2}(s, a)$. These critic networks estimate the value of taking action $a$ (i.e., adjusting $\rho$) in state $s$. 
The parameters $\phi_1$ and $\phi_2$ are trained by minimizing the Bellman error, which is expressed as:
\begin{equation}\label{min}
\min_{\phi_i}\mathbb{E}_{(s_k,a_k,s_{k+1},r_k)\sim \hat{D}_{k}}\left[\left(Q_{\phi_i}(s,a)-y\right)^2\right],
\end{equation}
where $i \in \{1, 2\}$, and $\hat{D}_{k}$ is a randomly sampled mini-batch of transitions retrieved from the experience replay buffer $D$ during the $k$-th training iteration, while the target value $y$ is calculated as:
\begin{equation}
    y=r_k+\gamma\min_{i=1,2}Q_{\phi_i^{\prime}}(s_{k+1},\pi_{\theta^{\prime}}(s_{k+1})),
\end{equation}
with $\phi_i^{\prime}$ and $\theta^{\prime}$ are the parameters of the target critic networks and target actor-network, respectively.

The actor-network parameters, represented by $\theta$, are optimized by using gradient descent to minimize the expected loss while promoting exploration through entropy regularization:
\begin{equation}\label{theta}
    \theta\leftarrow\theta-\hat{\eta}\nabla_\theta\left(\mathbb{E}_{s_{k}\sim \hat{D}_{k},a\sim\pi_\theta}\left[-\min_{i=1,2}Q_{\phi_i}(s,a)\right]-\bar{\beta} H(\pi_\theta(s))\right)
\end{equation}
where $H(\pi_{\theta}(s))$ is the entropy of the policy, $\hat{\eta}$ is the learning rate, and $\bar{\beta}$ is the temperature parameter that balances exploration and exploitation. More details about the DMSB auction are shown in Algorithm~\ref{alg:DMSB}.

The computational complexity for Algorithm~\ref{alg:DMSB} is $O(E\kappa [V+K|\theta |+(\varrho +1)(|\theta |+|\phi |)])$. This complexity consists of three major parts:
\begin{itemize}
    \item Environment interaction cost: $O(E\kappa V)$. During each episode $E$ and each iteration $\kappa $, the algorithm needs to observe the current state, execute actions, and observe the next state and corresponding rewards, with $V$ denoting the complexity of the auction environment interaction.
\end{itemize}
\begin{itemize}
    \item Reverse diffusion process overhead: $O(E\kappa K|\theta |)$~\cite{du2024diffusion}. This overhead arises from the reverse diffusion process, which consists of $K$ denoising steps of neural network inference using the actor networks, where the $|\theta |$ represents the number of parameters in the actor networks.
\end{itemize}
\begin{itemize}
    \item Parameter update cost:$O(E\kappa (\varrho +1)(|\theta |+|\phi |)])$~\cite{du2024diffusion}. This term includes the policy improvement cost $O(E\kappa \varrho|\theta |))$, the critic network parameters improvement cost $O(E\kappa \varrho|\phi |))$, and the target network update cost $O(E\kappa (|\theta |+|\phi |))$. Here, $\varrho$ represents the mini-batch size used for updating the networks, and $|\phi |$ represents the number of parameters in the critic networks.
    \end{itemize}
\begin{figure}[!t]
    \centering

    \subfloat[\label{fig:3_1_convergence}]{
        \includegraphics[width=0.4\textwidth]{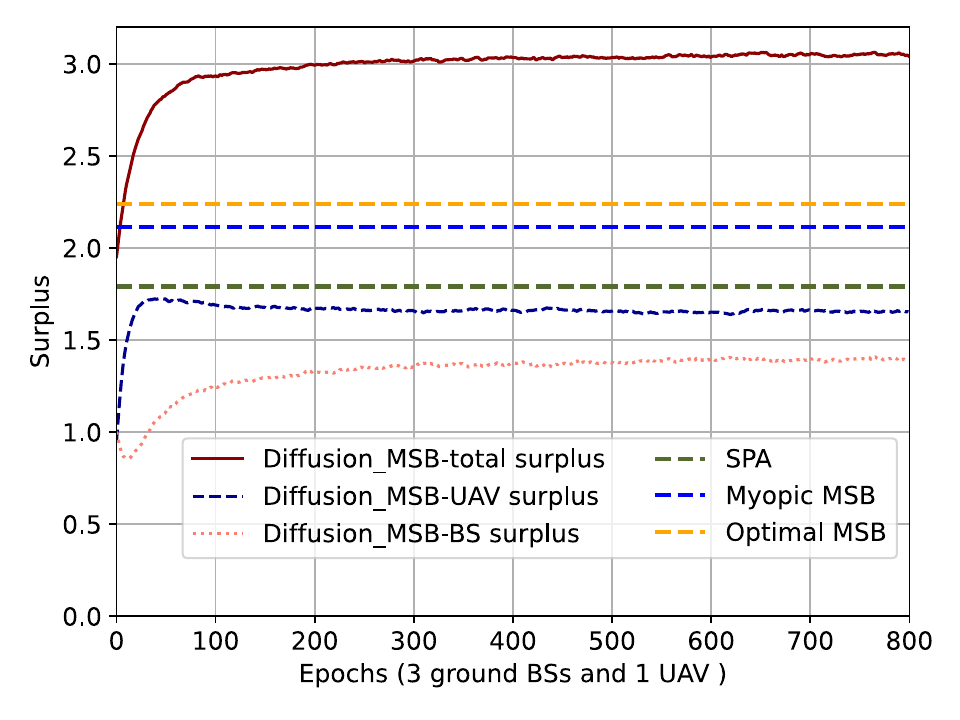}
    }
    \hspace{0.02\textwidth} 
    \subfloat[\label{fig:5_1_convergence}]{
        \includegraphics[width=0.4\textwidth]{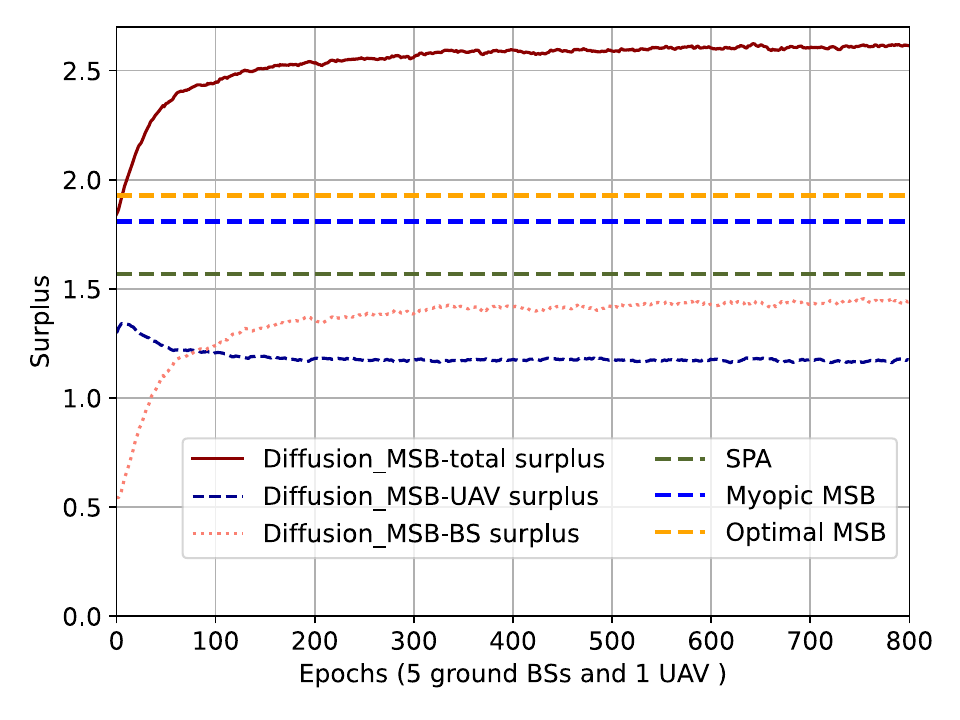}
    }

    \vspace{0.02cm} 

    \subfloat[\label{fig:7_1_convergence}]{
        \includegraphics[width=0.4\textwidth]{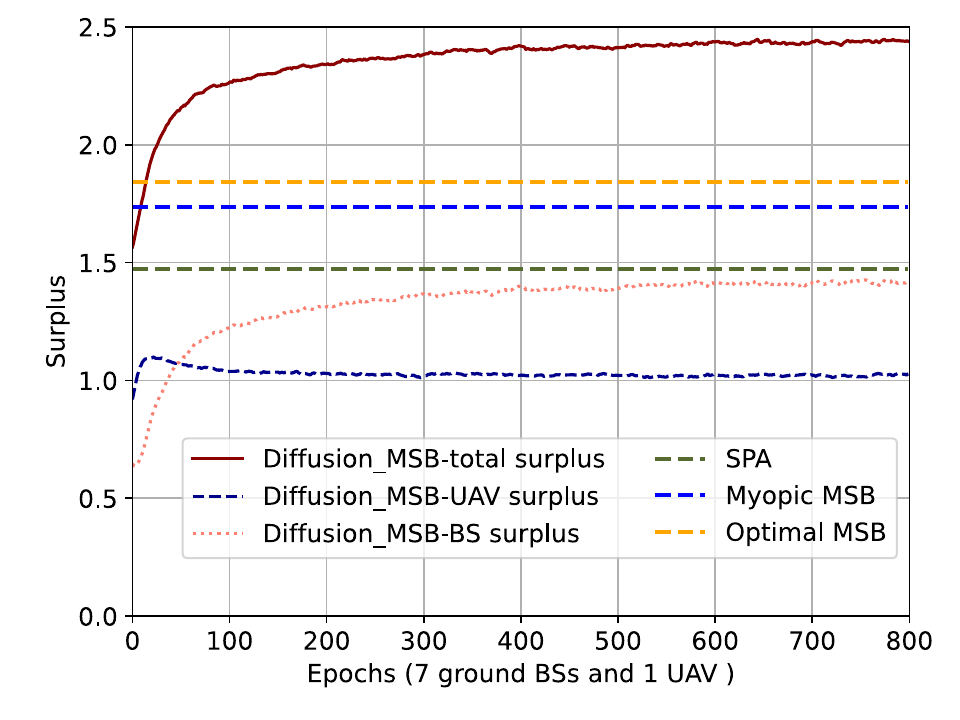}
    }
    \hspace{0.02\textwidth} 
    \subfloat[\label{fig:9_1_convergence}]{
        \includegraphics[width=0.4\textwidth]{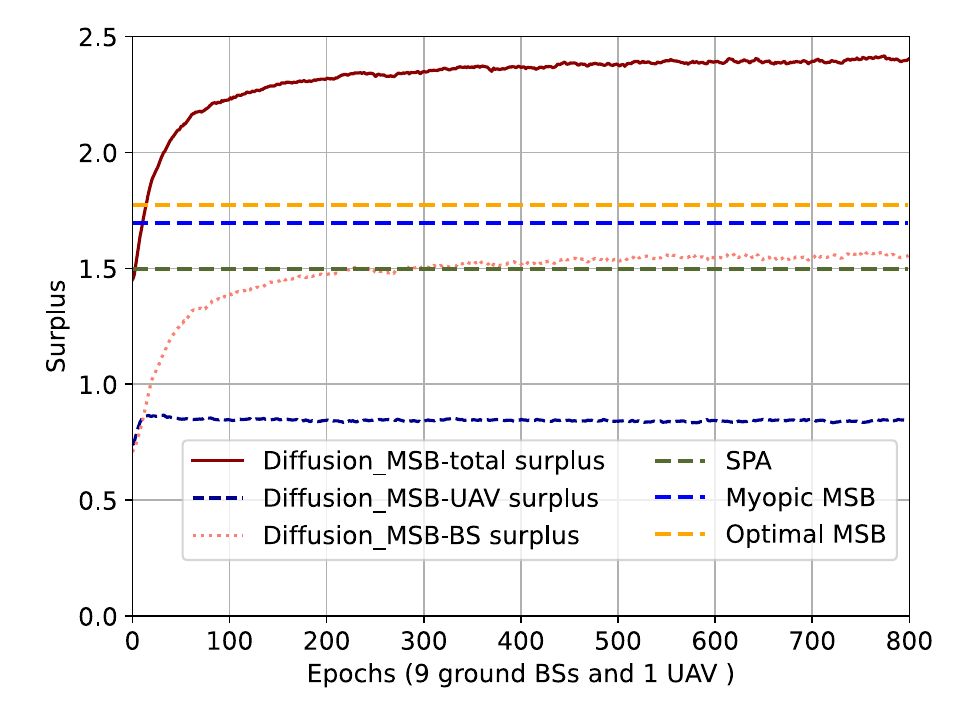}
    }

    \caption{Convergence of the proposed DMSB auction under surplus compared with the different number of resource providers. }
    \label{fig:overall-convergence-plots}
\end{figure}
\subsection{Property Analysis}
In an auction, strategy-proofness guarantees that bidders cannot increase their utility by misreporting their true bids. In addition, the absence of adverse selection implies that factors such as market externalities or information asymmetries do not affect bidder results.
Thus, it is necessary to prove that the proposed DMSB auction is entirely strategy-proof~\cite{milgrom2021auction} and free from adverse selection~\cite{arnosti2016adverse}, as presented in the following theorem.

\begin{theorem}\label{T1}
The DMSB auction, utilizing a dynamic price scaling policy governed by parameters $\rho$ and optimized through the Diffusion-based RL algorithm, maintains anonymity, fully strategy-proof, and is free from adverse selection.
\end{theorem}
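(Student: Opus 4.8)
The plan is to decompose Theorem~\ref{T1} into its three constituent claims---anonymity, full strategy-proofness, and freedom from adverse selection---and to establish each one \emph{pointwise in} $\rho$. The crucial observation I would record first is that, for any realization of the auction, the Diffusion-based RL algorithm outputs a single scaling value $\rho \ge 1$ that is then held fixed when the allocation and pricing rules are applied. Consequently the three properties need only be verified for an arbitrary fixed $\rho \ge 1$; this cleanly separates the incentive analysis of the MSB auction from the learning dynamics, so that optimizing $\rho$ for surplus (the objective of the RL agent) never interferes with the mechanism's game-theoretic guarantees.

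For anonymity I would argue directly from the stated rules. Both $\mathcal{X}^{BS}(b_n)$ and $\mathcal{P}^{BS}(b_n)$ depend on the bid profile only through $b_n$ and $\max\{b_{-n}\}$, and the UAV's outcome depends only on whether some performance bid clears the $\rho$-scaled threshold; none of these references a bidder's label. Hence permuting the identities of the performance bidders permutes the outcomes accordingly, which is exactly anonymity (the brand/performance split being a role distinction, not an identity one). For full strategy-proofness I would invoke the Vickrey-type structure of the rules. The key step is that a performance bidder's payment upon winning, $\rho\cdot\max\{b_{-n}\}$, is independent of its own bid $b_n$; the bid only determines \emph{whether} it wins, through the threshold $b_n > \rho\cdot\max\{b_{-n}\}$. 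Writing the bidder's utility as $v_n-\rho\cdot\max\{b_{-n}\}$ when winning and $0$ otherwise, truthful bidding $b_n=v_n$ makes it win precisely on the event where this surplus is positive, and a standard case analysis over over- vs.\ under-reporting shows no deviation strictly increases utility for any realization of $b_{-n}$. For the UAV I would appeal to the construction of its contracted bid $b_0=\arg\max_b\mathbb{E}[(v_0-v_{max})\mathbf{1}(v_{max}\le b_0)]$: since this value already maximizes the brand bidder's expected profit given the allocation threshold, any misreport yields weakly lower profit, giving dominant-strategy truthfulness for every bidder.

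The hardest part will be freedom from adverse selection, which I would establish following the conditional-expectation argument of Arnosti et al.~\cite{arnosti2016adverse}. Here I would exploit the modelling assumption that the common value $c_n$ and the match value $m_n$ are independent, and show that the $\rho$-scaled winning threshold is exactly what removes the correlation between the event ``the UAV wins'' and a systematically low common value---so that the brand bidder's expected value conditional on winning is unbiased, which is the content of being adverse-selection-free. The delicate point, and the main obstacle I anticipate, is that the learned $\rho$ is a function of a state $s_k=\{m^k,b^k\}$ that includes the current bid vector $b^k$; I would therefore need to verify that $\rho$ is set in an anonymous, bidder-symmetric fashion so that no single provider can manipulate the scaling through its own report. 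This is precisely the condition under which the strategy-proofness and adverse-selection arguments above remain valid, so I expect the bulk of the work to lie in isolating and justifying this non-manipulability requirement on the RL-induced $\rho$.
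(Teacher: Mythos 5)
Your proposal is essentially correct in substance, but it takes a genuinely different route from the paper. The paper's proof works entirely through the critical payment function $\psi(b_{-n};\rho)=\rho\max\{b_{-n}\}$ and verifies the two structural conditions from Arnosti et al.'s characterization: first, that $\psi(\max\{b_{-n}\};\rho)=\psi(b_{-n};\rho)$, which it uses to argue winner- and loser-\emph{false-name-proofness} (if either inequality failed, a winner could lower its bid or a loser could raise its bid without changing the rest of $b_{-n}$, breaking the mechanism); second, that $\psi$ is homogeneous of degree one, i.e.\ $\psi(\vartheta m_{-n};\rho)=\vartheta\,\psi(m_{-n};\rho)$, which is exactly what makes the allocation depend only on match values and not on the common value, and hence yields freedom from adverse selection. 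You instead verify the three properties directly: anonymity by inspection of the rules, strategy-proofness by the standard Vickrey argument (payment $\rho\max\{b_{-n}\}$ is independent of one's own bid, so truthful bidding is dominant for any fixed $\rho\ge 1$), and adverse selection by the conditional-expectation argument. Your route is cleaner and more self-contained for the incentive part; the paper's route buys the stronger ``fully'' strategy-proof claim, because false-name-proofness (robustness to a bidder manipulating the bid \emph{set}, not just its own report) is not implied by your Vickrey case analysis and is precisely what the paper's identity $\psi(\max\{b_{-n}\};\rho)=\psi(b_{-n};\rho)$ is there to deliver. Also note that your adverse-selection step is still a plan rather than a proof; when you carry it out, the formal core you will need is exactly the degree-one homogeneity of the $\rho$-scaled threshold that the paper writes down, so the two arguments converge there.

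Two further remarks. First, your opening reduction---fix $\rho\ge 1$ pointwise and separate the RL dynamics from the incentive analysis---is implicitly what the paper does as well, but you are right to make it explicit, and you are right that it is not innocuous: in Algorithm~\ref{alg:DMSB} the state $s_k=\{m^k,b^k\}$ contains the current bids, so the learned $\rho$ can in principle depend on a bidder's own report, which would break the ``payment independent of own bid'' step and hence strategy-proofness. Neither your sketch nor the paper's proof resolves this; the paper simply treats $\rho$ as exogenous at the moment the allocation and pricing rules are applied. Flagging this non-manipulability requirement on the RL-induced $\rho$ is a genuine contribution of your write-up, but to complete the theorem as stated you would need to either (i) restrict $\rho$ to depend only on market conditions $m^k$ and past (not current) bids, or (ii) show the induced dependence is bidder-symmetric and cannot be profitably exploited---neither of which is established in the paper.
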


\begin{proof}
    To demonstrate that the proposed DMSB auction is anonymous, fully strategy-proof, and free from adverse selection, we define a critical payment function, denoted as $\psi$. Let $\psi(b_{-n};\rho)$ represent the critical payment required for ground BS bidder $n$ to win, given the competing bids $b_{-n}$ and the price scaling factor $\rho$. Ground BS $n$ is declared the winner when its bid surpasses the critical payment $\psi(b_{-n};\rho)=\rho\max\{b_{-n}\}$, where $\rho=10^{a_k/|\mathcal{A}|}$. Upon winning, ground BS $n$ must pay $\psi(b_{-n};\rho)$. Given that $\rho=10^{a_k/|\mathcal{A}|} \ge 1$, only the highest bidder wins the auction, satisfying $\psi(b_{-n};\rho) \ge \max\{b_{-n}\}$. Moreover, the critical payment function of the DMSB auction satisfies the following conditions,
    \begin{equation}
\psi(\max\{b_{-n}\};\rho)=\rho\cdot\max\{\max\{b_{-n}\}\}=\rho\cdot\max\{b_{-n}\}=\psi(b_{-n};\rho).
    \end{equation}
   To ensure strategy-proofness, consider a scenario with two bidders in the market, where one bid exceeds $\psi(b_{-n};\rho)$ and the other is exactly equal to $\max\{b_{-n}\}$. In this case, if $\psi(\max\{b_{-n}\};\rho)\ne\psi(b_{-n};\rho)$, the auction could fail to satisfy the false-name-proof condition. 
    Specifically, when $\psi(b_{-n};\rho)<\psi(\max\{b_{-n}\};\rho)$, the first bidder may lower its bid without altering the remaining bids in the set $b_{-n}$, leading the DMSB auction fails to be winner-false-name-proof because lowering the bid would not change the bidder's winning status.
    Otherwise, if $\psi(b_{-n};\rho)>\psi(\max\{b_{-n}\};\rho)$, the DMSB auction does not prevent a losing bidder from submitting a higher bid to surpass that of the winner without altering the other bids in $b_{-n}$.
    In this case, the losing bidder could raise their bid relative to the winning bidder's bid and potentially change the auction result, violating loser-false-name-proofness. 
    Furthermore, the critical payment function $\psi(b_{-n};\rho)$ exhibits homogeneity of degree one, indicating that the auction is free from adverse selection. If $\psi(b_{-n};\rho)$ was not homogeneous of degree one, bidders could exploit their private information $\hat{\vartheta} \in \left \{ 1,\vartheta  \right \}$ to strategically modifying their bid, manipulating the results of DMSB auctions~\cite{xu2024cached}.
    Specifically, if $\psi(\mathbf{m}_{-n};\rho)<\psi(\vartheta\mathbf{m}_{-n};\rho)/\vartheta$, where $\vartheta\in\mathbb{R}_+,n\geq2$, $b_{-n}\in\mathbb{R}_+^{n-1}$ and $\hat{\vartheta}=1$, then $\mathcal{X}^{BS}(\hat{\vartheta}\mathbf{m})=\mathcal{X}^{BS}(\mathbf{m})=1_{\{m_{n}>\psi(m_{-n};\rho)\}}=1$, meaning $\mathcal{X}^{UAV}(\hat{\vartheta}\mathbf{m})=0$. When $\hat{\vartheta} \ne 1$, the bidders could modify their bids from $\hat{\vartheta}=\vartheta$, thereby altering their chances of winning the auction, i.e., $\mathcal{X}^{BS}(\hat{\vartheta}\mathbf{m})=\mathcal{X}^{BS}(\vartheta\mathbf{m})=1_{\{\vartheta m_{n}>\psi(\vartheta m_{-n};\rho)\}}= 0$. In this case, ground BS $n$ cannot be the winner despite submitting the highest bid, which results in the UAV bidder 0 winning the
    auction, i.e., $\mathcal{X}^{UAV}(\hat{\vartheta}\mathbf{m})=1$.

    From the above analysis, it is demonstrated that the proposed DMSB auction ensures anonymity, is entirely strategy-proof, and is free from adverse selection.
\end{proof}
\begin{figure}[!t]
    \centering

    \subfloat[Total Surplus versus Bandwidths\label{fig:revenue_bandwidth}]{
        \includegraphics[width=0.3\textwidth]{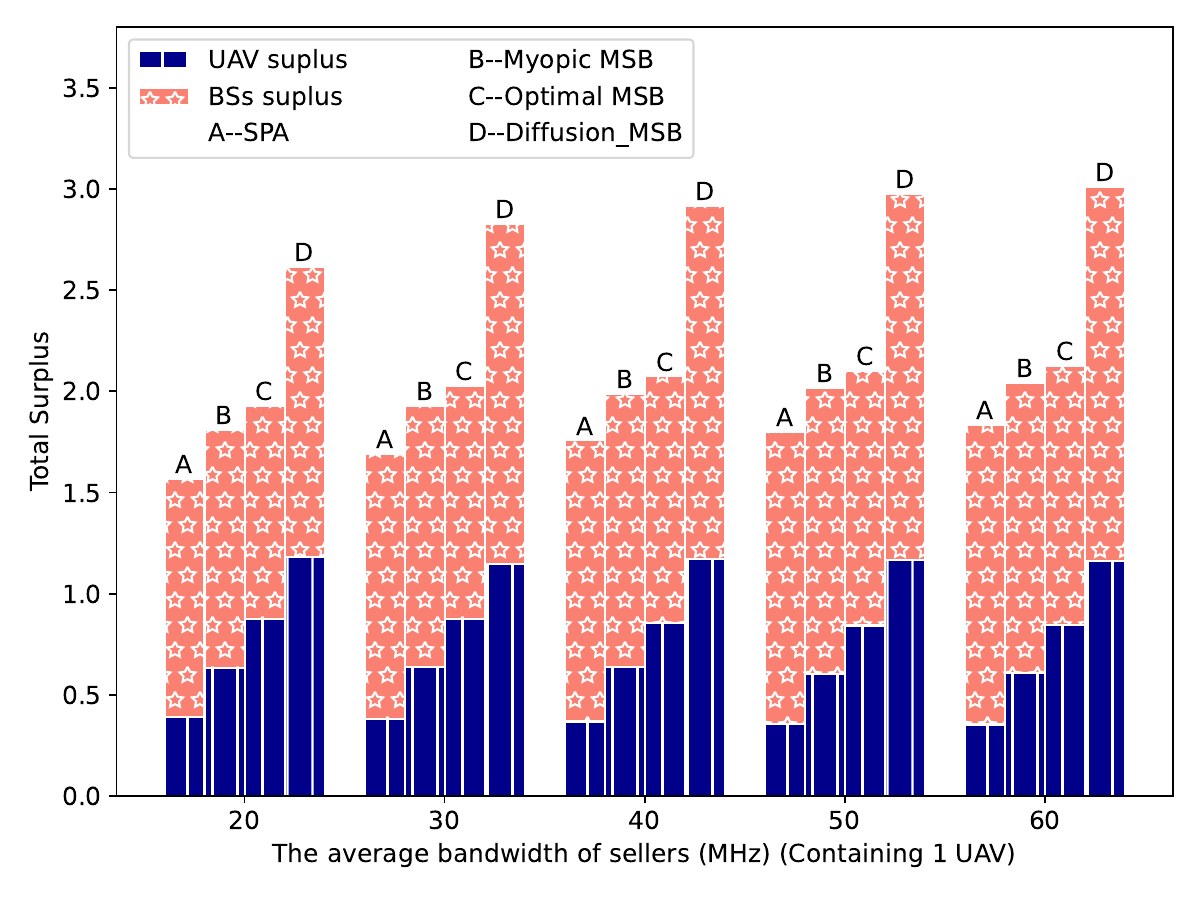}
    }
    \hspace{0.02\textwidth} 
    \subfloat[Total Surplus versus Seller Amount\label{fig:revenue_seller}]{
        \includegraphics[width=0.3\textwidth]{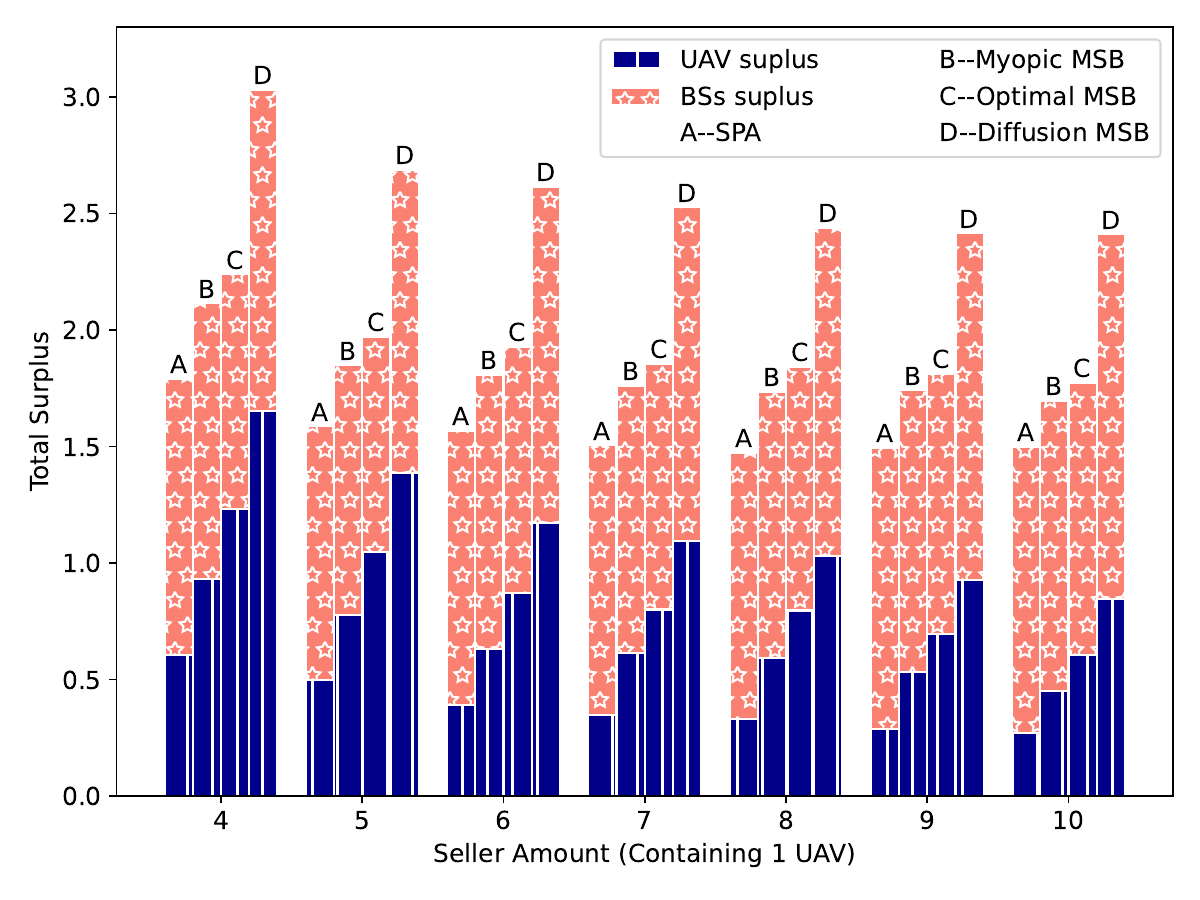}
    }
    \hspace{0.02\textwidth} 
    \subfloat[Total Surplus versus VT Sizes\label{fig:revenue_vt_size}]{
        \includegraphics[width=0.3\textwidth]{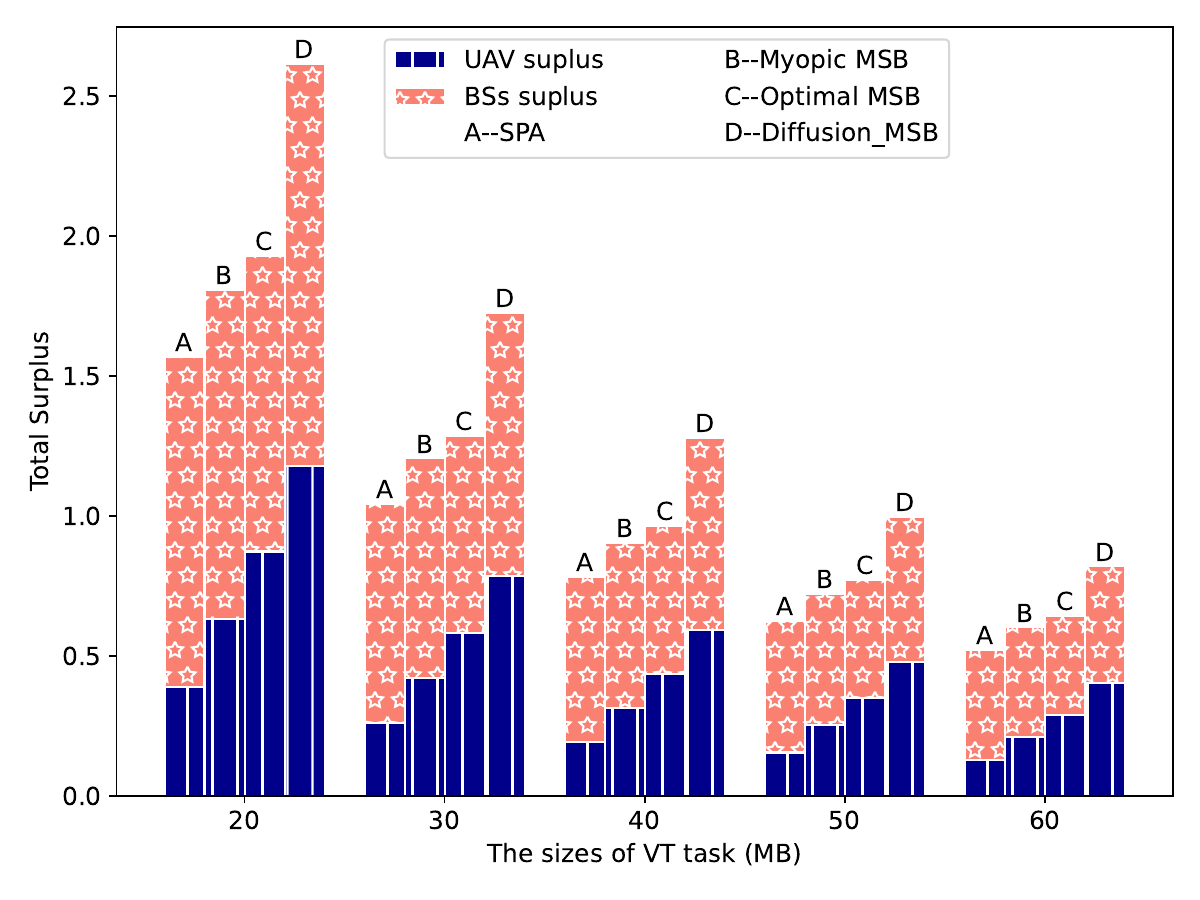}
    }

    \caption{Comparison of total surplus under different DMSB auction environment settings.}
    \label{fig:overall-revenue-plots}
\end{figure}

\section{Experimental Results}\label{experiment}
In this section, we first present the detailed experimental setup for the proposed DMSB auction in 6G-enabled Vehicular Metaverses. Next, we analyze the convergence of the proposed DMSB auction compared with Proximal Policy Optimization (PPO)-based MSB auction, Greedy-based MSB auction, and Random-based MSB auction methods. Finally, we conduct extensive experiments under different market settings to validate the effectiveness and performance of the proposed DMSB auction.
\subsection{Parameter Settings}
In the DMSB auction market, multiple resource providers, including ground BSs and a UAV, compete to process VT tasks containing large AI model-based VT tasks by submitting bids. In our simulation, the number of resource providers is set to 3, 4, 5, 6, 7, 8, and 9 ground BSs, with 1 UAV. The number of VT task requests is fixed at 1, and the size of each VT task is set between 20 and 40 MB to simulate different computing demands. Specifically, when a UAV or ground BS completes the processing of a VT task $D_m^{req}$, it generates another VT task of size $D_n^{task}$ based on the initial task input, assuming the sizes of $D_m^{req}$ and $D_n^{task}$ are identical.
The uplink bandwidth $B_{n}^{u}$ and downlink bandwidth $B_{n}^{d}$ for each ground BS and UAV vary between 20 MHz and 60 MHz. The computational efficiency of GPUs $f_{n}^{g}$ and CPUs $f_{n}^{c}$ are randomly allocated, with a maximum of 2 units per auction round. The GPU resource required to process each bit of data $e_{p}$ is set to 0.5. Additionally, the transmission power $p^d$ for a UAV and ground BSs, as well as $p^{u}$ for vehicular users, is randomly selected between 1 and 10 units. The sensitivity parameter $\beta$, which controls the responsiveness of the matching value $m_{n}$ to the changes in accuracy, is set to 2.

The auctioneer uses a Diffusion-based RL algorithm to dynamically optimize both the allocation and payment strategies by adjusting the price scaling factor $\rho$. We compare the convergence of the Diffusion-based RL algorithm with other baseline methods, such as PPO-based MSB auction, Greedy-based MSB auction, and Random-based MSB auction methods. 
In addition, we also compare the performance of the proposed DMSB auction with other auction baselines, including Second-Price Auction (SPA), Myopic MSB, and Optimal MSB~\cite{arnosti2016adverse, xu2024cached}.
Specifically, the price scaling factor $\rho$ in the Myopic MSB auction is set as $\rho=\max(1,u_0/u_{(2)})$, where $u_0$ is the highest bid, and $u_{(2)}$ is the second-highest bid, using information from the current round. In contrast, the price scaling factor $\rho$ in the Optimal MSB auction is set as $\rho=\max(1,\mathbb{E}_{[u_0]}/\mathbb{E}_{[u_{(2)}]})$~\cite{arnosti2016adverse}, based on historical statistical information.
\begin{figure}[!t]
    \centering

    \subfloat[\centering Total surplus versus Different bandwidths\label{bandwidth}]{
        \includegraphics[width=0.4\textwidth]{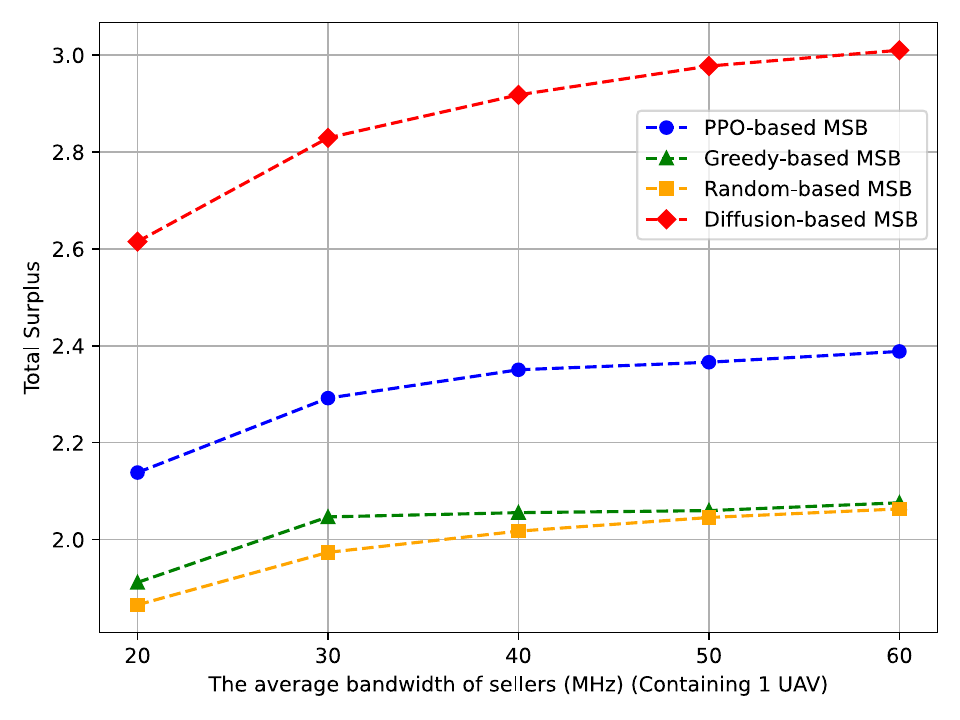}
        }
    \hspace{0.02\textwidth} 
    \subfloat[\centering Total surplus versus Different seller amounts\label{selleramount}]{
        \includegraphics[width=0.4\textwidth]{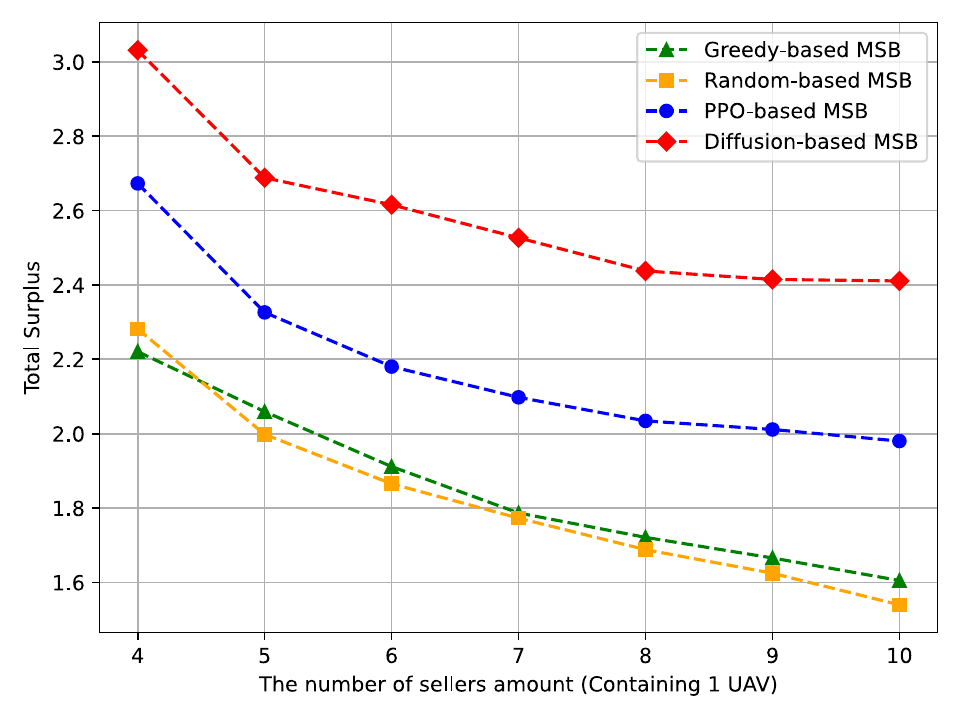}
    }
    \vspace{0.05cm} 
    \subfloat[\centering Total surplus versus Different sizes of VT task\label{VTsize}]{
        \includegraphics[width=0.4\textwidth]{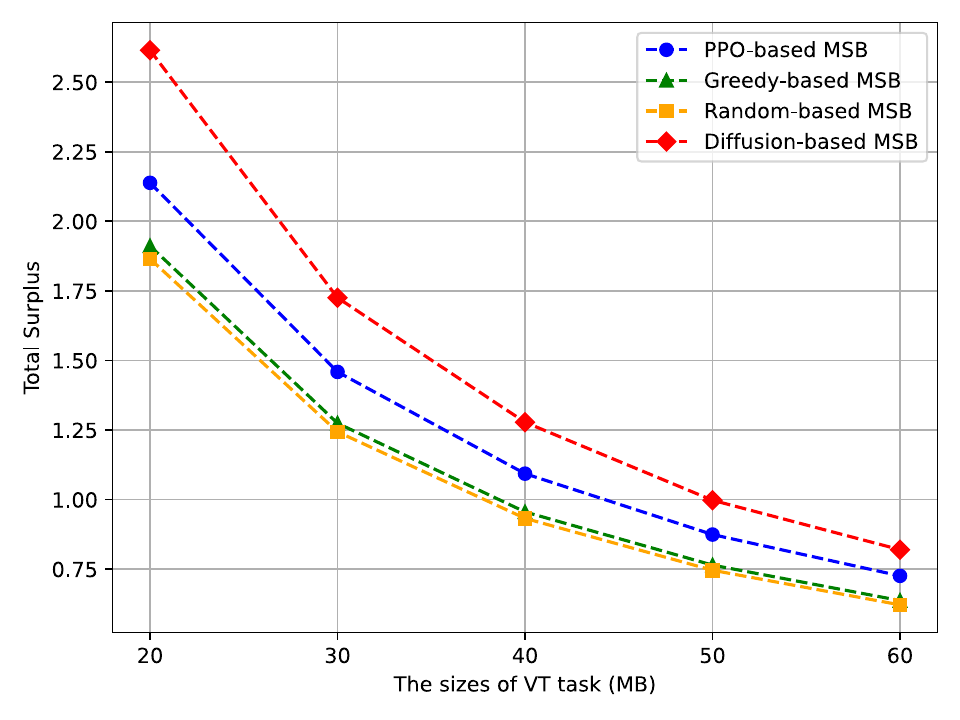}
    }
    \hspace{0.02\textwidth} 
    \subfloat[\centering Average delay versus Different sizes of VT task\label{delay}]{
        \includegraphics[width=0.4\textwidth]{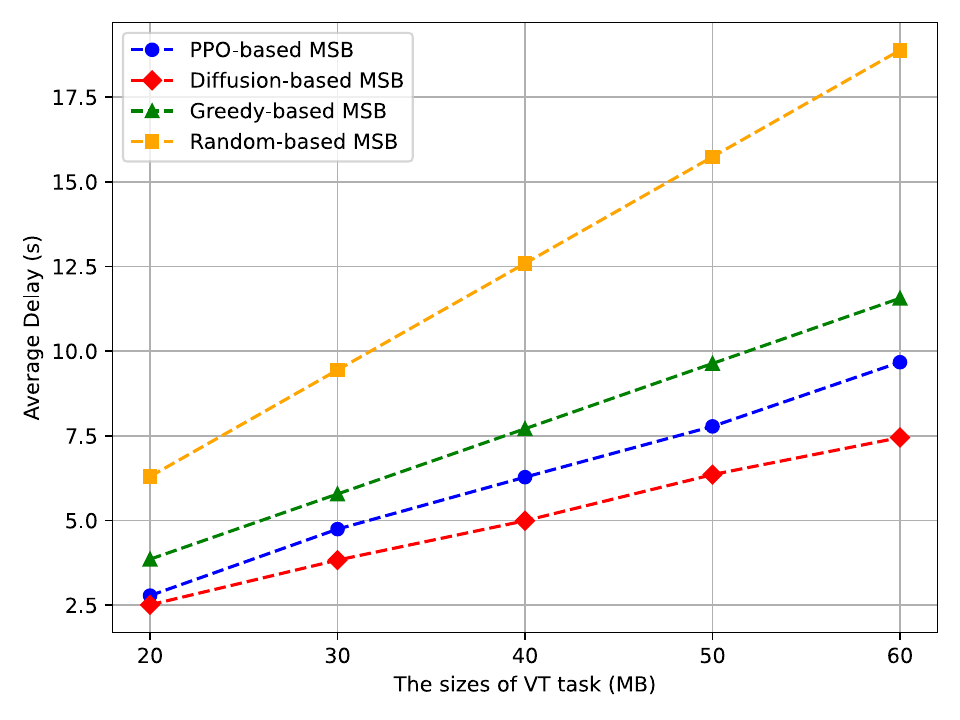}
    }

    \caption{Comparison of total surplus under different environment settings and different algorithms.}
    \label{fig_algorithm}
\end{figure}
\subsection{Convergence Analysis}
We analyze the convergence performance of the proposed DMSB auction compared to several baseline strategies, including PPO, Greedy, Random, and theoretical auction methods, as shown in Fig.~\ref{fig_cove_all}. The proposed DMSB auction demonstrates a rapid increase in total surplus at the beginning of training and stabilizes after approximately 500,000 environment timesteps, achieving a significantly higher total surplus than the baseline methods. Compared to PPO-based MSB auction, the DMSB auction achieves 13\% higher surplus. Moreover, the total surplus of our proposed DMSB auction is significantly higher those that of the Greedy-based MSB auction and random-based MSB auction, by 45\% and 63\% respectively, highlighting the effectiveness of dynamic optimization in the auction environment. In terms of auction baselines, the DMSB auction outperforms all traditional methods, with 30\% higher total surplus than the SPA, 27\% higher than the Myopic MSB auction, and 23\% higher than the Optimal MSB auction.

We further analyze the convergence of the DMSB auction under surplus compared with different numbers of resource providers. As illustrated in Figs.~\ref{fig:3_1_convergence} -~\ref{fig:9_1_convergence}, the DMSB auction consistently achieves a higher total surplus compared to traditional auction baselines, highlighting its robustness in convergence under different resource providers. Initially, as shown in Fig.~\ref{fig:3_1_convergence}, with fewer ground BSs (i.e., 3 ground BSs and 1 UAV), the UAV achieves a higher surplus, indicating it acts as the primary resource provider to handle most of the VT tasks for vehicular users. However, as the number of ground BSs increases, as seen in Figs.~\ref{fig:5_1_convergence} -~\ref{fig:9_1_convergence}, the surplus of ground BSs grows steadily and surpasses the UAV. Despite this shift, the UAV maintains a stable surplus, though may lower than ground BSs.
\subsection{Performance Evaluation For the Proposed Auction} We evaluate the performance of the proposed DMSB auction across different auction environments, focusing on bandwidth, seller amounts, and VT task sizes, as shown in Fig.~\ref{fig:overall-revenue-plots}. In particular, Fig.~\ref{fig:revenue_bandwidth} shows that as seller bandwidths increase, the total surplus of all auction mechanisms also rises, with the proposed DMSB auction consistently outperforming SPA, Myopic MSB, and Optimal MSB. It can be seen that ground BSs contribute the most to the total surplus. In contrast, the UAV surplus remains stable, highlighting the positive correlation between bandwidth and surplus generation, particularly for ground BSs to process VT tasks. Fig.~\ref{fig:revenue_seller} presents the total surplus as the number of ground BSs increases. Although there is a slight decline in total surplus with more ground BSs, the DMSB auction consistently achieves a higher total surplus than other auction mechanisms. The marginal utility of adding more BSs decreases, leading to a lower total surplus, but the DMSB auction remains robust, demonstrating superior performance even with more sellers. In Fig.~\ref{fig:revenue_vt_size}, as the size of VT tasks increases, the total surplus decreases across all auctions due to higher computing and communication demands. However, the DMSB auction maintains a consistently higher total surplus than those of the SPA, Myopic MSB, and Optimal MSB, illustrating its robustness in handling greater computing demands.

We also evaluate the performance of the Diffusion-based RL algorithm compared with PPO, Greedy, and Random, as shown in Fig.~\ref{fig_algorithm}. From Figs.~\ref{bandwidth} -~\ref{VTsize}, it can be observed that as seller bandwidths, seller amounts, and VT task sizes increase, the total surplus achieved by the proposed DMSB auction consistently outperforms PPO, Greedy, and Random algorithms. It demonstrates the advantage of the Diffusion-based RL algorithm in dynamically adjusting the price scaling factor $\rho$, leading to more efficient resource allocation and higher total surplus. In addition, Fig.~\ref{delay} shows as the size of VT tasks increases, the average delay rises across all auction mechanisms. However, the DMSB auction consistently achieves the lowest delay compared to PPO-based, Greedy-based, and Random-based MSB auctions.

\section{Conclusion}\label{conclusion}
In this paper, we have proposed a novel DMSB auction for efficient resource allocation of large AI model-based VT tasks in 6G-enabled Vehicular Metaverses. The proposed auction addresses the challenges of dynamic resource demands and information asymmetry between UAVs and ground BSs in the auction process by incorporating latency and task accuracy as common values and match values, respectively. In addition, we have devised a Diffusion-based RL algorithm to dynamically adjust the auction's price scaling factor, thereby optimizing the allocation of large AI model-based VT tasks and maximizing the total surplus of resource providers while minimizing latency. The simulation results have demonstrated the superior performance of the proposed method, achieving improved resource distribution, reduced latency, and enhanced service quality compared to traditional auction baselines.






\bibliographystyle{IEEEtran}
\bibliography{main}


\end{document}